\documentclass[a4paper]{article}
\usepackage{fullpage}
\usepackage{amsmath, amssymb}
\usepackage{amsthm}
\usepackage{color}
\usepackage{verbatim, xspace}
\usepackage{comment}
\newtheorem{theorem}{Theorem} 
\newtheorem{definition}[theorem]{Definition}
\newtheorem{lemma}[theorem]{Lemma}
\newtheorem{corollary}[theorem]{Corollary}
\newtheorem{fact}[theorem]{Fact}

\newenvironment{proof-sketch}{\noindent{\bf Sketch of Proof}\hspace*{1em}}{\qed\bigskip}
\newenvironment{proof-idea}{\noindent{\bf Proof Idea}\hspace*{1em}}{\qed\bigskip}
\newenvironment{proof-of-lemma}[1]{\noindent{\bf Proof of Lemma #1}\hspace*{1em}}{\qed\bigskip}
\newenvironment{proof-attempt}{\noindent{\bf Proof Attempt}\hspace*{1em}}{\qed\bigskip}

\newcommand{\probfont}[1]{\textnormal{\textsc{#1}}} 

\newcommand{\tpe}{TP$_{{k}}$-Enum}
\newcommand{\tp}{TP$_{{k}}$}
\newcommand{\rtp}{TP$_{{k}}$-Res}
\newcommand{\kds}{$k$-decomposition}
\newcommand{\ktds}{tree pack($k$)}
\newcommand{\kd}{\kds{}}
\newcommand{\pack}{$k$-pack}
\newcommand{\tpack}{$k$-\probfont{Tree-Pack}}
\newcommand{\ktd}{\ktds{}}
\newcommand*{\mmc}[1]{\mathcal{#1}}
\newcommand*{\nmc}[1]{$\mathcal{#1}$}
\newcommand*{\mc}[1]{\ifmmode\mmc{#1}\else\nmc{#1}\fi}
\newcommand{\si}[1]{\textnormal{SI(}#1\textnormal{)}}
\newcommand*{\mms}[1]{\mathscr{#1}}
\newcommand*{\nms}[1]{$\mathscr{#1}$}
\newcommand*{\ms}[1]{\ifmmode\mms{#1}\else\nms{#1}\fi}

\newcommand{\Oh}{\mathcal{O}}

\newcommand{\defproblem}[3]{
  \vspace{2mm}
  \vspace{1mm}
\noindent\fbox{
  \begin{minipage}{0.95\textwidth}
  #1 \\
  {\bf{Input:}} #2  \\
  {\bf{Task:}} #3
  \end{minipage}
  }
  \vspace{2mm}
}

\newcommand{\kpst}{$k$-\textsc{Partial Spanning Tree}\xspace}

\title{Enumerating tuples of spanning trees}
\author{Rahul CS and Micha{\l} W{\l}odarczyk}

\begin{document}

\maketitle

\section{Introduction}
Consider the problem of enumerating all spanning trees in undirected graphs. Let $G=(V,E)$ be an undirected graph. The problem of enumerating all spanning trees has long history and numerous applications in networks and circuit analysis. We use the notation T-Enum to refer to this problem. The work detailed in  \cite{KR95} gives a simple and efficient solution for the problem where the delay between two successive enumerations is polynomial. Another classical problem of the kind is to compute a set $\mathcal S$ of pairwise edge disjoint spanning trees. The number of trees in the set (tree pack) $\mathcal S$ is part of problem definition. We use the notation \tp{} to represent this problem.  

Motivated by the above two problems, we look at the problem of enumerating all tree packs in a  graph with polynomial delay between two succesive enumerations. Let $k$ be the number of spanning trees in a pack. The problem T-Enum corresponds to the case where $k=1$. 

Tutte \cite{WT61} and Nash-Williams \cite{NW61} came up with the following nice characterization \cite[Theorem 1]{TK12} for the existence of such a pack. 
\begin{theorem}
\label{thm:main-characterization}
 A graph contains $k$ pairwise edge disjoint spanning trees if and only if for every partition \mc P of $V(G)$, the graph $G/$\mc P has $k(|$\mc P$-1|)$ edges.
\end{theorem}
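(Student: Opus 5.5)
Read ``has $k(|\mc P|-1)$ edges'' as ``at least $k(|\mc P|-1)$ edges''. This is the Tutte--Nash-Williams theorem, and I would prove it with the matroid union theorem applied to $k$ copies of the graphic matroid $M$ of $G$. The rank function is $r(F)=|V|-c(F)$, where $c(F)$ is the number of components of $(V,F)$.

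Necessity is immediate. Fix a partition $\mc P$. Each spanning tree $T$ induces a connected spanning subgraph of $G/\mc P$, so $T$ has at least $|\mc P|-1$ edges crossing between parts. The $k$ trees are edge-disjoint, so $G/\mc P$ has at least $k(|\mc P|-1)$ edges.

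For sufficiency, the key tool is the matroid union (Nash-Williams/Edmonds) rank formula. The maximum size of $F_1\cup\dots\cup F_k$ with each $F_i$ a forest equals
\[
\min_{A\subseteq E}\bigl(|E\setminus A| + k\, r(A)\bigr).
\]
$G$ has $k$ edge-disjoint spanning trees iff this maximum is at least $k(|V|-1)$. Indeed, $k$ disjoint forests with $k(|V|-1)$ edges in total must each have exactly $|V|-1$ edges, hence are spanning trees. So it suffices to show that for every $A\subseteq E$ we have $|E\setminus A|+k(|V|-c(A))\ge k(|V|-1)$, that is, $|E\setminus A|\ge k(c(A)-1)$.

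Given $A$, let $\mc P$ be the partition of $V$ into the vertex sets of the components of $(V,A)$, so $|\mc P|=c(A)$. Every edge of $G/\mc P$ joins two different components of $A$, so it is not in $A$. Hence $|E\setminus A|\ge |E(G/\mc P)|\ge k(|\mc P|-1)=k(c(A)-1)$ by hypothesis, which gives the bound.

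The main obstacle is justifying the matroid union rank formula itself. I would either cite it or prove it by the standard augmenting-path argument. That argument takes a maximal family of disjoint forests $F_1,\dots,F_k$ and builds the exchange digraph, with an arc $e\to f$ whenever $F_i-f+e$ is a forest for the $i$ with $f\in F_i$. If no augmenting path exists, let $A$ be the set of edges reachable from the uncovered edges. Each $F_i\cap A$ then spans $A$ in $M$, which gives equality in the formula. A purely graph-theoretic alternative is the classical Nash-Williams induction via contraction of a tight part, but I expect the matroid route to be cleaner.
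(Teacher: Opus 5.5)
Your proof is correct, and reading ``has $k(|\mathcal P|-1)$ edges'' as ``at least'' is the right interpretation. It does not follow the route the paper relies on, however.

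The paper never proves Theorem~\ref{thm:main-characterization} itself. It cites Tutte, Nash-Williams and Kaiser's short proof \cite{TK12}, and recasts that argument as an algorithm in Section~3.1. That argument is purely graph-theoretic and constructive. It starts from a $k$-decomposition whose first $k-1$ coordinates are trees. It refines $\mathcal P_0=\{V\}$ along the components of the first coordinate that is disconnected inside a part, and assigns levels to edges. As long as $G_k/\mathcal P_\infty$ contains a multicycle, it moves least-level edges into the splitting coordinates (the CONNECTOR chain), which strictly lowers the number of components of $G_k$. When no multicycle exists, counting shows that $\mathcal P_\infty$ itself violates the condition.

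You instead put all the difficulty into the matroid union rank formula. After that, the only graph-specific content is the two-line passage from $A$ to the component partition of $(V,A)$.

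\emph{What the paper's route buys.} It is self-contained and elementary, and it explicitly constructs both the packing and the violating partition. That construction is exactly what the authors then modify for the restricted problem.

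\emph{What your route buys.} It is shorter and more general. Use the contractions $M/F_i$ restricted to $E_f$ (rank $r(A\cup F_i)-|F_i|$) and Edmonds' union formula for distinct matroids. Then the same reduction gives Theorem~\ref{thm:characterization} directly, reading $\mathrm{def}$ as number of components minus one. The reason is that, for $\mathcal P$ the component partition of $(V,A)$, $(V,A\cup F_i)$ has as many components as $F_i/\mathcal P$. Moreover, the augmenting-path (matroid partition) algorithm you sketch runs in polynomial time. It returns either the packing or a minimizing $A$, and that $A$ yields a violating partition.

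If you prove the union formula rather than cite it, the real work is one lemma. You must show that a \emph{shortest} augmenting path gives a valid simultaneous exchange, so that a \emph{maximum} (not merely maximal) family admits no augmenting path.
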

This is such a clinical characterization which gives us the following corollary being recursively evolved  from the original version. 
\begin{corollary}
\label{cor:algorithm}
Let $\mc P$ be a nontrivial partition of $V(G)$ where $G[X]$ contains $k$ pairwise edge disjoint spanning trees for each $X$ in $\mc P$.  Then the graph $G$  has $k$ pairwise edge disjoint spanning trees if and only if for every partition \mc P' of $\mc P$, the graph $G/$\mc P' has $k(|$\mc P'$-1|)$ edges.
\end{corollary}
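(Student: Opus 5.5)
The plan is to reduce Corollary~\ref{cor:algorithm} to Theorem~\ref{thm:main-characterization}, reading the condition there as ``$G/\mc P$ has \emph{at least} $k(|\mc P|-1)$ edges.'' Throughout, a partition $\mc P'$ of $\mc P$ is identified with the coarsening of $\mc P$ obtained by merging the parts in each block, which is again a partition of $V(G)$.

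\textbf{Forward direction.} Suppose $G$ has $k$ edge-disjoint spanning trees. Any partition $\mc P'$ of $\mc P$ gives, by this identification, a partition of $V(G)$ with the same number of blocks and the same quotient graph $G/\mc P'$. The condition then follows directly from the ``only if'' part of Theorem~\ref{thm:main-characterization}. (Directly: each tree contributes at least $|\mc P'|-1$ crossing edges, and the trees are edge-disjoint.)

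\textbf{Backward direction.} This is where the work lies. Assume every partition of $\mc P$ satisfies the edge bound, and fix an arbitrary partition $\mc Q$ of $V(G)$. I must show that $G/\mc Q$ has at least $k(|\mc Q|-1)$ edges.

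Take the common coarsening $\mc R$: the partition of $V(G)$ whose blocks are the connected components of the hypergraph with hyperedges the blocks of $\mc P\cup\mc Q$. Equivalently, $\mc R$ is the finest partition coarser than both $\mc P$ and $\mc Q$. Since $\mc R$ is a partition of $\mc P$, the hypothesis gives at least $k(|\mc R|-1)$ edges between distinct $\mc R$-blocks. All of these edges also cross $\mc Q$.

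Next, handle each $\mc R$-block $R$. Inside $R$, count the $\mc Q$-crossing edges within $G[X]$ for each $X\in\mc P$ with $X\subseteq R$. Because $G[X]$ has $k$ edge-disjoint spanning trees, it has at least $k(q_X-1)$ edges crossing $\mc Q|_X$, where $q_X$ is the number of $\mc Q$-blocks meeting $X$. Summing over $X\subseteq R$ and over all $R$, the total number of $\mc Q$-crossing edges is at least
\[
k(|\mc R|-1)+k\sum_{X\in\mc P}(q_X-1).
\]

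The main obstacle is the counting step: showing $\sum_X (q_X-1)\ge |\mc Q|-|\mc R|$. I would argue this with the bipartite incidence graph $B$ between $\mc P$ and $\mc Q$, with an edge whenever a $\mc P$-block and a $\mc Q$-block intersect. Its edges number $\sum_X q_X$, its vertices number $|\mc P|+|\mc Q|$, and its connected components correspond exactly to the blocks of $\mc R$. A graph with $c$ components on $n$ vertices has at least $n-c$ edges, so
\[
\sum_X q_X\ge |\mc P|+|\mc Q|-|\mc R|,
\]
which is precisely the required inequality. Combining the two bounds gives at least $k(|\mc Q|-1)$ crossing edges, and the ``if'' part of Theorem~\ref{thm:main-characterization} finishes the proof.

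Two side points to check: non-triviality of $\mc P$ is not needed for this argument, and the trivial one-block partitions give vacuous conditions.
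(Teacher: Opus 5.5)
Your argument is correct, but it takes a different route from the one the paper has in mind. The paper gives no written proof. It only says the corollary ``evolves recursively'' from Theorem~\ref{thm:main-characterization}, and later uses it in the form ``tree packs of the parts together with a tree pack of $G/\mathcal{P}_\infty$ give a tree pack of $G$.'' That is the gluing argument:
\begin{itemize}
\item Partitions of $\mathcal{P}$ are exactly partitions of $V(G/\mathcal{P})$, and $(G/\mathcal{P})/\mathcal{P}'=G/\mathcal{P}'$.
\item So Theorem~\ref{thm:main-characterization}, applied to the multigraph $G/\mathcal{P}$, yields edge-disjoint spanning trees $T'_1,\dots,T'_k$ of $G/\mathcal{P}$.
\item Take the $|\mathcal{P}|-1$ edges of $G$ underlying $T'_i$, together with the $i$-th spanning tree of every $G[X]$. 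This is a connected spanning subgraph with $(|\mathcal{P}|-1)+\sum_X(|X|-1)=n-1$ edges, hence a spanning tree, and these $k$ trees are pairwise edge-disjoint.
\end{itemize}

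You instead verify the partition inequality for an arbitrary $\mathcal{Q}$, using the join $\mathcal{R}$ and the incidence-graph bound $\sum_X q_X\ge|\mathcal{P}|+|\mathcal{Q}|-|\mathcal{R}|$, and then invoke the theorem on $G$ itself. Your steps all check out:
\begin{itemize}
\item Edges across $\mathcal{R}$ cross $\mathcal{Q}$, because $\mathcal{Q}$ refines $\mathcal{R}$.
\item Those edges are disjoint from the edges inside the parts $X$.
\item $B$ has exactly $|\mathcal{R}|$ components.
\end{itemize}

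Both routes use the sufficiency direction of Theorem~\ref{thm:main-characterization} exactly once: the paper's on $G/\mathcal{P}$, yours on $G$. The gluing route is constructive and needs no counting lemma. It is precisely what the paper's algorithm relies on, since it combines packs of the parts with a pack of the quotient. Your route is non-constructive, but it proves a purely combinatorial fact: the Nash-Williams--Tutte inequality on all partitions of $V(G)$ follows from the inequality on coarsenings of $\mathcal{P}$ plus the inequality inside each part. It uses the hypothesis on $G[X]$ only through that counting consequence.
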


We extend this  characterization to come up with a polynomial time algorithm for the \textit{tree-pack enumeration problem}, formally defined as follows.  

\defproblem{\probfont{Tree-Pack Enumeration}(\tpe{})}
{An undirected graph $G=(V,E)$ and a parameter $k$}
{Output each set $\mathcal S$ of $k$ pairwise edge disjoint spanning trees with polynomial delay between two successive enumerations}

Note that our problem is not about enumerating all sets of edges that constitutes a tree pack of the above form, but to enumerate the packs itself.

\section{Preliminaries}
We use standard definitions and notations from West\cite{DW00}. The graphs we deal with are undirected multigraphs. Given a graph $G=(V,E)$ having $n$ vertices, we use $V(G)$ (or $V$ when the graph under consideration is clear from the context) to represent the vertex set of $G$. Similarly, $E(G)$ (or $E$) represents the edge set. A \textit{component} in a graph is a maximal subgraph that is connected. A \textit{spanning} subgraph is a subgraph whose vertex set is same as that of  the parent graph. Given subsets $E'$ of $E(G)$, we use the notation $G[E']$ to represent the subgraph of $G$ induced by $E'$. Similarly, given $V'\subseteq V(G)$, $G[V']$ represents the  subgraph of $G$ induced by $V'$. We abuse the notion of cycle as we refer to a pair of parallel edges using the same. Given a nontrivial partition $\{X, V\setminus X\}$ ($\phi \neq X \neq V$) of $V$, a \textit{cut} in a graph $G$ is the set of edges in $G$ having one end in $X$ and the other in $V\setminus X$. An \textit{open cut} is a cut where the associated edge set is an empty set.

Let $\mc P$ be a partition of a given set $S$. A \textit{refinement} $\mc Q$ of $\mc P$ is defined as a partition of $S$ where each $Q\in \mc Q$ is a subset of some $P\in \mc P$.

Given a partition $\mathcal P$ of $V(G)$, we use the notation $G/\mathcal P$ to refer to the \textit{quotient} graph obtained by applying vertex contraction in $G$ where vertices in the same part in $\mathcal P$ gets contracted to a single vertex. Further, drop the so formed loop edges if any. Formally, the vertex set of $G/\mathcal P$ is $\mathcal P$. Edge set of $G/\mathcal P$ is formed by starting with an empty graph and adding an edge $(P_u,P_v)$ to $E(G/\mathcal P)$ for each edge $(u,v)$ in $G$, where $u\in P_u$, $v\in P_v$ and $P_u \neq P_v$. Given a partition $\mc P$ of $V(G)$, we use the term \textit{multicycle} to refer to a cycle $C$ in $G$ that is not strictly contained in any part in $\mc P$. Thus, the multicycle contains at least one edge in $G/\mc P$ and induces at least one cycle in $G/\mc P$. 
\begin{fact}
\label{fact:cycle}
 Let $\mc P$ be a partition of $V(G)$. Let $P\in \mc P$ and $C$ be a cycle in $G[P]$. Let $\mc P'$ be a refinement of $\mc P$ such that at least one pair of vertices in $V(C)$ belongs to distinct parts in $\mc P'$. Then, there exists an edge $e$ in $C$ such that the end vertices of $e$ belongs to distinct parts in $\mc P'$.
\end{fact}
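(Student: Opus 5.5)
The plan is to argue directly from the cyclic structure of $C$, with no earlier result needed beyond the definitions of partition and refinement. Write the cycle as a closed walk $v_0, e_1, v_1, e_2, \dots, e_m, v_m = v_0$, where $e_i$ has end vertices $v_{i-1}$ and $v_i$. Here $m \geq 2$, and $m=2$ covers the paper's convention that a pair of parallel edges counts as a cycle. For each vertex $v$ of $C$, let $\mc P'(v)$ denote the part of $\mc P'$ containing $v$; this is well defined because $\mc P'$ is a partition of $V(G)$.

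The key step is a contrapositive propagation argument. Suppose, for contradiction, that no edge of $C$ has its end vertices in distinct parts of $\mc P'$. Then $\mc P'(v_{i-1}) = \mc P'(v_i)$ for every $i = 1, \dots, m$. Induction on $i$ along the walk gives $\mc P'(v_i) = \mc P'(v_0)$ for all $i$. Every vertex of $C$ occurs as some $v_i$, so all of $V(C)$ lies in the single part $\mc P'(v_0)$. This contradicts the hypothesis that some pair of vertices of $V(C)$ lies in distinct parts of $\mc P'$. Hence some edge $e = e_i$ has end vertices in distinct parts, as claimed.

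Equivalently, the argument can be phrased via the two distinguished vertices $x, y \in V(C)$ that lie in distinct parts: follow one of the two $x$--$y$ paths along $C$, and the first edge at which the part label changes is the required $e$. The same reasoning works for any connected subgraph, not only cycles.

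Note that the hypotheses that $C \subseteq G[P]$ and that $\mc P'$ refines $\mc P$ are not actually used. They only provide the context in which the fact will be applied later, namely that such an edge $e$ is a loop in $G/\mc P$ but becomes a genuine edge of $G/\mc P'$.

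There is no real obstacle here. The only point needing care is the degenerate multigraph case of two parallel edges between $u$ and $v$. In that case the hypothesis forces $u$ and $v$ into distinct parts, and both edges qualify, so the argument above covers it without modification.
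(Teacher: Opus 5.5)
Your proof is correct. The paper states this as a Fact without proof, and your propagation-along-the-cycle argument is the evident justification; it is the same reasoning the paper uses inline in the proof of Lemma~\ref{lem:cycle}, where a path joining two distinct parts must contain an edge whose ends lie in distinct parts. You are also right that neither the hypothesis $C \subseteq G[P]$ nor the assumption that $\mc P'$ refines $\mc P$ is needed.
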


A $k$-decomposition $\mathcal T$ of a graph $G$ is a $k$-tuple $(G_1$, $G_2$ ,$\ldots$, $G_k)$ where each $G_i$ is a spanning subgraph, and $\{E(G_i): 1\leq i \leq k\}$ is a partition of $E(G)$. We define a \pack{} of $G$ to be a $k$-decomposition of an arbitrary spanning subgraph of $G$. Note that, the $k$-tuple composed of the edge sets of the coordinates in a \pack{} uniquely identifies the \pack{}. We use the notation $E_i(\mc T)$ to refer to the edge set of the $i^{th}$ coordinate in $\mc T$. We define a \tpack{} in a graph to be a \pack{} where each coordinate is a tree. Further, given a pair of \pack{}s   \mc T, $\mc T'$ of $G$, we say \mc T \textit{restricts} \mc T' (\mc T' \textit{restricted by} \mc T) if $E_i(\mc T)\subseteq E_i(\mc T')$ for $1\leq i \leq k$.
\begin{definition}[Tuple isomorphism]
We call $k$-tuples $(a_1, \dots a_k),\, (b_1, \dots b_k)$
isomorphic if there is a~permutation $\pi \in S_k$ such that $a_i = b_{\pi(i)}$ for each $1 \le i \le k$.
\end{definition}

\section{The Restricted Tree Packing Problem}
In this section, we look at a variant of the \tp{} problem. The input to the problem is a set of pairwise edge disjoint forests $F_1$, $F_2$ ,$\ldots$, $F_k$ of $G$ and the question of interest is to compute a a set of pairwise edge disjoint spanning trees $T_1$, $T_2$,$\ldots$, $T_k$ where $E(F_i)\subseteq E(T_i)$ for $1\leq i\leq k$. We refer to this problem as \rtp{}, formally defined as follows. 

\defproblem{\kpst(\rtp{})}
{Undirected multigraph $G$, and a \pack{} $\mc T=(F_1,F_2,\ldots,F_k)$}
{Compute a \tpack{} $\mc T'=(T_1,T_2,\dots, T_k)$ restricted by $\mc T$}


We solve the \rtp{} problem using the following characterization which is an extension of the characterization in \cite[Theorem 1]{TK12}. We give a constructive proof for the characterization which is similar in spirit to the proof in \cite{TK12}.
\begin{theorem}
\label{thm:characterization}
 Let $G=(V,E)$ be an undirected graph and let $\mc T=(F_1,F_2,\ldots,F_k)$ be a \pack{} of  $G$. Let $E_{f}$ be the set $\displaystyle E\setminus \cup_{i=1}^k E(F_i)$. The graph $G$ has a \tpack{} restricted by \mc T if and only if, for each partition $\mathcal P$ of $V(G)$, $\displaystyle \Sigma_{i=1}^k$ def($F_i/\mathcal P)\leq |E((V,E_f)/\mathcal P)|$. 
\end{theorem}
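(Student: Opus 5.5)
The plan is to reduce the statement to Edmonds' matroid union theorem, using one contracted graphic matroid for each coordinate. Throughout, I read def$(F_i/\mc P)$ as the number of components of $F_i/\mc P$ minus one. This is the number of further edges that a connected spanning subgraph of $G/\mc P$ containing $F_i/\mc P$ must use.

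\emph{Necessity.} Suppose $(T_1,\dots,T_k)$ is a \tpack{} restricted by $\mc T$, and fix a partition $\mc P$. Each $T_i/\mc P$ is connected and contains $F_i/\mc P$. So $T_i$ has at least def$(F_i/\mc P)$ edges outside $E(F_i)$ that cross $\mc P$. These edges lie in $E_f$, because the edge sets of the $T_i$ are pairwise disjoint and $E(F_j)\subseteq E(T_j)$. For the same reason, different $i$ use different such edges. Summing over $i$ gives the inequality.

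\emph{Sufficiency.} For each $i$, let $M_i$ be the graphic matroid of $G/E(F_i)$ restricted to the ground set $E_f$; contracting a forest creates no loops from its own edges. A spanning tree $T_i\supseteq F_i$ of $G$ corresponds exactly to a set $B_i\subseteq E_f$ that is a basis of $G/E(F_i)$. A basis of $M_i$ is such a set precisely when $M_i$ has full rank $n-1-|E(F_i)|$. Therefore a \tpack{} restricted by $\mc T$ exists if and only if there are pairwise disjoint sets $I_i$, each independent in $M_i$, with $\sum_i |I_i| = \sum_i (n-1-|E(F_i)|)$. By the matroid union theorem, the maximum of $\sum_i|I_i|$ over disjoint independent sets is
\[
\min_{A\subseteq E_f}\Bigl(|E_f\setminus A|+\sum_{i=1}^k r_i(A)\Bigr).
\]
Here $r_i(A)=n-c(F_i\cup A)-|E(F_i)|$, where $c(\cdot)$ counts components of the spanning subgraph on $V$. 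Hence it suffices to show, for every $A\subseteq E_f$,
\[
\sum_{i=1}^k \bigl(c(F_i\cup A)-1\bigr)\le |E_f\setminus A|.
\]
Taking $A=\emptyset$ in this inequality also forces each $M_i$ to have full rank, so no separate connectivity check is needed.

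\emph{Main step: from edge sets to partitions.} Given $A\subseteq E_f$, let $\mc P$ be the partition of $V$ into the vertex sets of the components of $(V,A)$. The components of $F_i\cup A$ correspond to the components of $F_i/\mc P$, so $c(F_i\cup A)-1=$ def$(F_i/\mc P)$. Every edge of $E_f$ that crosses $\mc P$ cannot lie in $A$, so $|E((V,E_f)/\mc P)|\le|E_f\setminus A|$. The hypothesis for this $\mc P$ then yields the required inequality.

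This translation is where I expect the only real care to be needed. One must check that the rank formula survives the contraction by $F_i$, and that loops created in $G/\mc P$ are dropped consistently with how the paper defines $G/\mc P$.

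For a constructive proof, as the paper intends, the matroid partition algorithm produces the sets $I_i$ in polynomial time. When the algorithm fails, it returns a violating set $A$, and the construction above turns $A$ into a violating partition $\mc P$.
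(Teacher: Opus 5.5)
Your proof is correct, and it takes a different route from the paper. The paper adapts the algorithmic proof from \cite{TK12}:
\begin{itemize}
\item it builds a refinement sequence $\mathcal P_0,\mathcal P_1,\dots,\mathcal P_\infty$ driven by split indices and flex edges;
\item it runs the cascading exchange RES\_CONNECTOR along edges of strictly decreasing level, maintaining the invariant that $G_{rf}/\mathcal P_{final}$ contains a multicycle with a flex edge;
\item it relies on a counting argument to make $\mathcal P_\infty$ the violating partition when $G_k/\mathcal P_\infty$ has no such multicycle.
\end{itemize}
You instead treat the question as matroid partition over the contracted graphic matroids $M(G/E(F_i))$ restricted to $E_f$. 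You then convert the minimizer $A$ of the matroid union formula into the partition of $V$ given by the components of $(V,A)$.

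What your route buys:
\begin{itemize}
\item It is short and fully verifiable.
\item It proves necessity explicitly, which the paper never does.
\item It pins down def as (number of components)$\,-1$, the reading under which the statement is true; the paper leaves def undefined.
\item It shows that only partitions arising from components of subgraphs of $(V,E_f)$ need checking.
\end{itemize}
What it costs: it uses Edmonds' theorem as a black box, and constructivity comes from the general matroid partition algorithm rather than a graph-specific augmentation. The paper's augmentation also yields the $\Oh(n)$ bound on outer rounds, via the decreasing number of components of $G_k$. For the enumeration theorem only a polynomial-time oracle for the restricted problem is needed, so either approach suffices there.

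Two small corrections. First, your remark that $A=\emptyset$ forces each $M_i$ to have full rank is wrong. That instance reads $\sum_i(c(F_i)-1)\le|E_f|$, which does not imply it. Take $V=\{1,2,3\}$, $k=2$, $F_1$ the path $1\,2\,3$, $F_2$ edgeless, and $E_f$ two parallel edges $12$. Then $0+2\le 2$ holds, but $M_2$ has rank $1<2$. It is $A=E_f$ that gives $c(F_i\cup E_f)=1$ for all $i$. The remark is redundant anyway: your reformulation with $\sum_i|I_i|=\sum_i(n-1-|E(F_i)|)$ and $|I_i|\le n-1-|E(F_i)|$ already forces every $I_i$ to have full size.

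Second, state explicitly that the $F_i$ are forests. Your vertex count $n-|E(F_i)|$ for $G/E(F_i)$ depends on it, and so does sufficiency itself. For example, with $k=1$, $F_1$ a Hamiltonian cycle and $E_f=\emptyset$, the condition holds but no spanning tree contains $F_1$. This matches the paper's own caveat that the characterization only makes sense for forests.
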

Note that the characterization makes sense only when each $F_i$ in \mc T is a forest.

To start with, we tweak the algorithm for \tp{}, detailed in \cite{TK12} and its proof in a way to extend later to solve \rtp{}. As in \cite{TK12}, our algorithm is inductive in nature where we assume the solution for the problem to be available for the parameter value being $k-1$ and we solve for $k$. The base case ($k=1$) is straightforward. Further, this modified version is tighter in terms of runtime analysis. We use the label \mc A to refer to the algorithm in \cite {TK12} and we assume we deal with a "yes" instance of the problem.
\subsection{A Reformulation of the Algorithm for \tp{}}
The algorithm \mc A is a tail recursive algorithm which ensures a steady progress in each recursion. The input to each recursion in the algorithm is a \kd{} \mc T and the output being another \kd{} \mc T' which being fed as the input to the next recursion, unless the progress function hits its threshold. Once the algorithm terminates, it outputs either a \ktd{} or a \kd{} certifying the violation of the characterization in Theorem \ref{thm:main-characterization}.

A single recursion of \mc A takes a \kd{} \mc T as input having $k$-1 of its coordinates being trees. The remaining coordinate is a non-tree, as otherwise we are done. There is an arbitrary strict ordering between the coordinates, with the last being the non-tree. Our algorithm follows the same structure with each recursion constitutes of two phases. The first phase is exactly the same as in \mc A. In the second phase, \mc A performs a single edge swap between the coordinates $G_k$ and some $G_j$ in $\mc T$, retaining the connectedness of $G_j$ ($j\neq k$). We modify this part in such a way that we perform minor shortcutting operations on the computation path associated with $\mc A$.

In the first phase, the algorithm performs a sequence of partition refinement operations starting from the partition $\mc P_0=\{V(G)\}$. Each refinement constructs a new partition $\mc P_{i+1}$ from $\mc P_i$ as follows: For each $X$ in $\mc P_i$ and coordinate $G_j$ in \mc T (iterate $j$ from 1 to $k$ as in the predefined ordering), check whether $G_j[X]$ is connected. Pick the least $j$ violating the condition. Now, we construct the partition $\mc P_{i+1}$ from $\mc P_i$ by adding the vertex set of each component in $G_j[X]$ as a part to $\mc P_{i+1}$, for each $X$ in $\mc P_i$. Thus, $\mc P_{i+1}$ is a refinement of $\mc P_i$. At this point, we define an indexing function SI($\mc P_i$) (the \textit{split index} of the partition $\mc P_i$) as the index $j$ of the coordinate in $\mc T$ being responsible for the split. We address this coordinate as the splitter of $\mc P_i$, denoted as $S(\mc P_i)$ which is exactly same as $G_{\si{\mc P_i}}$. Further, the level $l(e)$ of an edge is defined as the largest $i$ such that some part in $\mc P_{i}$ contains both  ends of $e$. 

This refinement procedure terminates at least when $\mc P_i$ is $V(G)$. This final partition in the sequence is being addressed as $\mc P_{\infty}$, where $G_j[X]$ is connected for each part $X$ in $\mc P_{\infty}$ and each $j$ between 1 and $k$. The split index of $\mc P_{\infty}$ is infinity($>k$). Note that $\mc P_1$ is induced by the components in $G_k$. The split index of $\mc P_0$ is always at least $k$. Our objective is to hit a recursion where $\mc P_0=\mc P_{\infty}$ (the split index of $\mc P_0$ is greater than $k$). 

\begin{lemma}
\label{lem:extend}
 Let $P$ be a part in a partition $\mc P_i$ in a sequence constructed above. Let $u$, $v$ be a pair of vertices in $P$ and be in the same component in $S(\mc P_i)$. Then $u$ and $v$ belong to distinct parts in $\mc P_{i+1}$ if and only if there exists an edge $e'$ in each  path from $u$ to $v$ in $S(\mc P_i)$ having one of its ends in $P$ ($P_{uv}$ spans beyond $P$).
\end{lemma}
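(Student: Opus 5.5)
The plan is to reduce the statement to the definition of the refinement step. By construction, the parts of $\mc P_{i+1}$ contained in $P$ are exactly the vertex sets of the components of $S(\mc P_i)[P]$, the subgraph of the splitter induced by $P$. This holds because $\mc P_{i+1}$ is obtained by splitting every $X\in\mc P_i$ along the components of $G_j[X]$ with the single index $j=\si{\mc P_i}$. So ``$u$ and $v$ lie in distinct parts of $\mc P_{i+1}$'' is equivalent to ``$u$ and $v$ are not connected in $S(\mc P_i)[P]$''.

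I read the condition ``$e'$ has one of its ends in $P$'' as ``exactly one end of $e'$ lies in $P$'', i.e.\ $e'$ crosses the cut $(P, V\setminus P)$; this is what the parenthetical ``$P_{uv}$ spans beyond $P$'' indicates. I expect fixing this reading to be the only real subtlety, since the combinatorics is immediate. The key observation, used in both directions, is that a $u$--$v$ path in $S(\mc P_i)$ starts in $P$. Hence it leaves $P$ if and only if some edge on it has exactly one end in $P$. Otherwise all its vertices, and thus all its edges, lie in $S(\mc P_i)[P]$.

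For the forward direction I argue by contraposition. Suppose some $u$--$v$ path in $S(\mc P_i)$ has no crossing edge. By the observation it is a path in $S(\mc P_i)[P]$, so $u$ and $v$ lie in the same component of $S(\mc P_i)[P]$, hence in the same part of $\mc P_{i+1}$.

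For the converse, again by contraposition, suppose $u$ and $v$ lie in the same part $Q\in\mc P_{i+1}$. Then $Q\subseteq P$ is the vertex set of a component of $S(\mc P_i)[P]$. So there is a $u$--$v$ path in $S(\mc P_i)[P]$, which is also a path in $S(\mc P_i)$, and all of its vertices lie in $P$. Therefore none of its edges crosses out of $P$, contradicting the hypothesis that every such path contains a crossing edge.

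The hypothesis that $u$ and $v$ are in the same component of $S(\mc P_i)$ only guarantees that the set of $u$--$v$ paths is nonempty, so that the ``every path'' condition is not vacuously true.
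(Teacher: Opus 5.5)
Your proof is correct and follows essentially the same route as the paper. Both arguments identify the parts of $\mc P_{i+1}$ inside $P$ with the components of $S(\mc P_i)[P]$, and observe that a $u$--$v$ path avoids every edge with exactly one end in $P$ precisely when it lies in $S(\mc P_i)[P]$; the paper's proof uses the same ``exactly one end'' reading, and your version simply writes out both directions more explicitly.
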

\begin{proof}
Let us consider the forward direction of the claim. Clearly, there exists at least one path $P_{uv}$ from $u$ to $v$. For a contradiction, let us assume there does not exist an edge $e'$ in $P_{uv}$ having exactly one of its ends in $P$. This implies, all the edges in $P_{uv}$ is are present in $S(\mc P_i)[P]$. This contradicts the premise that $u$ and $v$ belongs to distinct parts in $\mc P_{i+1}$. 

To prove the reverse direction, observe that there does not exist a path $P_{uv}$ in $S(\mc P_i)[P]$. In other words, they belong to distinct components in $S(\mc P_i)[P]$.
\end{proof}

\begin{lemma}
 \label{lem:level}
 Given a partition $\mc P_i$ in the sequence and a $G_j$ in $\mc T$, the level of each edge in $G_j/\mc P_{i}$ is less than $i$. 
\end{lemma}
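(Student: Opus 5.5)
The plan is to argue directly from the definition of level and the fact that the partitions form a chain of refinements. Fix $\mc P_i$, a coordinate $G_j$, and an edge $e=(u,v)$ of $G_j$ that survives in $G_j/\mc P_i$. By the definition of the quotient graph, $u$ and $v$ lie in distinct parts of $\mc P_i$. We want to show $l(e)<i$, where $l(e)$ is the largest index $t$ such that some part of $\mc P_t$ contains both $u$ and $v$.

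First, $l(e)$ is well defined. The partition $\mc P_0=\{V(G)\}$ has a single part containing both ends, so the set of admissible indices contains $0$. Next I would record the monotonicity property. By construction, each $\mc P_{t+1}$ is a refinement of $\mc P_t$, so for $t\le s$ every part of $\mc P_s$ lies inside a part of $\mc P_t$. Consequently, if $u$ and $v$ are separated in $\mc P_i$, they remain separated in every $\mc P_s$ with $s\ge i$. Indeed, if some part $Q\in\mc P_s$ contained both, then the part of $\mc P_i$ containing $Q$ would contain both, a contradiction. I would prove this by a short induction on $s-i$, using transitivity of refinement.

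It follows that no index $s\ge i$ is admissible for $e$, so the largest admissible index is at most $i-1$, i.e.\ $l(e)<i$. When $i=0$ the quotient $G_j/\mc P_0$ has no edges, so the statement holds vacuously. The sequence is finite, ending at $\mc P_\infty$, so the maximum exists. I would also state that indices "beyond" the sequence are irrelevant, since the level is taken only over partitions in the constructed sequence.

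The only real subtlety is a bookkeeping one. Edges of $G_j/\mc P_i$ must be identified with the edges of $G_j$ whose ends lie in distinct parts of $\mc P_i$, with loops dropped. I expect this identification, together with the refinement-chain monotonicity, to be the main (though mild) obstacle. The rest is immediate from the definitions, and nothing beyond the construction of the sequence (not Lemma~\ref{lem:extend}) is needed.
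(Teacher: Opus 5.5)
Your proof is correct and takes the same approach as the paper. An edge of $G_j/\mathcal P_i$ has its ends in distinct parts of $\mathcal P_i$, and because the sequence is a chain of refinements, any partition containing both ends in one part must come before $\mathcal P_i$, so $l(e)<i$. You also spell out the refinement monotonicity and the vacuous case $i=0$, which the paper's one-line argument leaves implicit.
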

\begin{proof}
 Observe that an edge $e$ in $G_j/\mc P_i$ has its end vertices belong to distinct parts in $\mc P_i$. Hence the partition $\mc P_{l(e)}$ has to be coarser than $\mc P_i$. 
\end{proof}

\begin{lemma}
 \label{lem:cycle}
Let $\mc P_i$ be partition in the sequence and let $G_j$ in $\mc T$. Let $\mc C$ be a multicycle in $G_j/\mc P_i$ and let $e$ be an edge of least level in $C_j/\mc P_i$. Then, the  vertex set of $\mc C$  is  contained in some part $P\in \mc P_{l(e)}$.
\end{lemma}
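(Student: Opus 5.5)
Let $m=l(e)$ be the least level among the edges of $\mc C$ that lie in $G_j/\mc P_i$. We show that every vertex of $\mc C$ lies in a single part of $\mc P_m$.

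The proof uses two properties of the refinement sequence. The sequence is nested: each $\mc P_{t+1}$ refines $\mc P_t$. Levels are monotone: if some part of $\mc P_t$ contains both ends of an edge $f$, then so does some part of $\mc P_{t'}$ for every $t'\le t$. So $l(f)\ge t$ exactly when both ends of $f$ lie in a common part of $\mc P_t$.

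\textbf{Step 1: every edge of $\mc C$ has level at least $m$.} There are two kinds of edges.
\begin{itemize}
\item An edge $f$ of $\mc C$ that survives in $G_j/\mc P_i$ has $l(f)\ge m$ by the minimality of $m$.
\item An edge $f$ of $\mc C$ contracted away in $G_j/\mc P_i$ has both ends in one part of $\mc P_i$. Hence $l(f)\ge i$. By Lemma~\ref{lem:level}, $m<i$, so again $l(f)\ge m$.
\end{itemize}
By the characterization above, both ends of every edge of $\mc C$ therefore lie in a common part of $\mc P_m$.

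\textbf{Step 2: propagate along the cycle.} The cycle $\mc C$ is connected. Consecutive vertices along $\mc C$ lie in the same part of $\mc P_m$, and the parts of a partition are disjoint. Inducting along the cyclic sequence of vertices, all of $V(\mc C)$ lies in the single part $P\in\mc P_m$ containing the ends of $e$. This is the claim.

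The main subtlety is the meaning of "level of an edge in $G_j/\mc P_i$". Edges of the quotient correspond to original edges of $G_j$ between distinct parts, so their level is defined via the original ends. One also has to check that contracted edges inside parts of $\mc P_i$ do not spoil the minimum. Lemma~\ref{lem:level} handles this, since it gives $m<i\le$ the level of every contracted edge. It follows that $e$ in fact realizes the minimum level over all edges of $\mc C$.

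Step 2 also shows that $\mc C$ has an edge whose ends are separated in $\mc P_{m+1}$ (namely $e$). Together with Fact~\ref{fact:cycle}, this fits the intended use of the lemma. For the present statement, Steps 1 and 2 suffice.
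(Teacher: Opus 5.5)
Your proof is correct and follows essentially the paper's argument. The paper argues by contradiction that a path in $\mc C$ between two vertices in different parts of $\mc P_{l(e)}$ must contain an edge of level less than $l(e)$ (via Lemma~\ref{lem:level}), which is the contrapositive of your Steps 1--2; your version additionally checks explicitly that edges of $\mc C$ contracted inside parts of $\mc P_i$ have level at least $i>l(e)$, a point the paper leaves implicit when it invokes minimality.
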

\begin{proof}
 We assume the claim to be wrong and get a contraction. Let there be a pair of vertices $u$, $v$ in $V(\mc C)$ such that $u$ and $v$ belongs to distinct parts $P_1$, $P_{2}$ in $\mc P_{l(e)}$. Consider the graph $G_j/\mc P_{l(e)}$. Pick an arbitrary path $P_{uv}$ in $\mc C$. The path $P_{uv}$ induces a path between $P_1$ and $P_2$ in $G_j/\mc P_{l(e)}$. Thus, at least one edge $e'$ in $P_{uv}$ induces an edge in $G_j/\mc P_{l(e)}$ and from Lemma \ref{lem:level}, the level of $e'$ is less than $l(e)$. This gives us the required contradiction.
\end{proof}

Before getting in to the second phase, the algorithm searches for a multicycle  in $G_{rf}/P_{final}$. If there does not exist one, the algorithm terminates reporting failure.
In the second phase, we perform a sequence of \textit{edge exchange} operations between the coordinates of $\mc T$. By doing so, we incorporate  multiple recursions of algorithm \mc A in to the second phase. In our algorithm, the second phase itself is a recursive  subroutine with two input parameters. We refer to this subroutine with the label CONNECTOR($rf$, $final$). The first parameter $rf$ is an index referring to a coordinate in $\mc T$, and $final$  in an index referring to a partition in the partition sequence $\mc P_{0}$, $\mc P_{1}$,$\ldots$, $\mc P_{\infty}$. We assume global access to the \kd{} \mc T and the partition sequence. 
The initial call to this phase being made with parameters $k$ and $\infty$. 

The subroutine picks an arbitrary multicycle $\mc C$ in $G_{rf}/P_{final}$.  
Then the algorithm picks an edge $e$ in $\mc C$ 
with least level $l(e)$ and performs the assignment $i=l(e)$. 
The algorithm moves the edge $e$ from $G_{rf}$ to $S(\mc P_{i})$. This procedure breaks a cycle in $G_{rf}$. If the end vertices of $e$ are already connected in $S(\mc P_{i})$, then we invoke the next recursion in the subroutine, setting $final=i$ and $rf=$SI($\mc P_i$). Otherwise, this is the end of the current recursion in the subroutine and we invoke the next recursion of our algorithm with the parameter being \mc T.


\subsubsection*{Proof of Correctness}
We follow the same line of proof as in \cite{TK12}. The way we modify the algorithm helps us to redefine the progress function associated with the algorithm in a much simpler way.  We claim that in each recursion of our algorithm, there is a strict reduction in the number of components in $G_k$, which is enough to ensure a positive progress towards the solution.

Recall that by the end of first phase, each $X$ in $\mc P_{\infty}$ has the property that $G_{j}[X]$ is connected for each $G_j$ ($1\leq j\leq k$) in $\mc T$ and hence a tree pack($k$) for each $G[X]$. These tree packs together with the solution for $G/\mc P_{\infty}$ constitutes the final solution.  This is by virtue of Corollary \ref{cor:algorithm}. It is easy to see that this construction is recursive in nature. This is an additional filter function that could be added to our algorithm and $\mc A$.

Next we discuss the correctness of the second phase which itself is a recursive subroutine. We make the following claim.
\begin{lemma}
Each call to the subroutine CONNECTOR($rf$,$final$) maintains the invariant that the subgraph $G_{rf}/\mc P_{final}$ has at least one multicycle in it. 
\end{lemma}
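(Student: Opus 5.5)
The invariant will be proved by induction on the recursion depth of CONNECTOR. In the base case, the initial call is CONNECTOR($k$,$\infty$). This call is reached only after the check preceding the second phase has found a multicycle in $G_k/\mc P_{\infty}$; otherwise the algorithm terminates with failure. So the invariant holds there. It will also help to record a second invariant: every coordinate other than $G_{rf}$ is a forest (in fact a tree) and $G_{rf}$ contains a cycle. At the start, coordinates $1,\dots,k-1$ are trees and $G_k$ is a non-tree.

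For the inductive step, assume CONNECTOR($rf$,$final$) is called and $G_{rf}/\mc P_{final}$ contains a multicycle $\mc C$. Let $e$ be the edge of least level $i=l(e)$ in $\mc C$. Lemma \ref{lem:level} gives $i<final$. Lemma \ref{lem:cycle} gives that $V(\mc C)$ lies inside a single part $P\in\mc P_i$. The edge $e$ is moved into $S(\mc P_i)=G_{j}$, where $j=\si{\mc P_i}$. A new call is made only when the ends $u,v$ of $e$ were already connected in $G_j$. Then adding $e$ to $G_j$ closes a cycle $C'$ consisting of $e$ and a $u$--$v$ path $Q$ in $G_j$ (before the move). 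We must show that $C'$ is a multicycle with respect to $\mc P_i$, i.e.\ that $C'$ is not contained in a single part of $\mc P_i$... or, more precisely, since the recursion uses $\mc P_{final}=\mc P_i$, that $C'$ has an edge of $G_j/\mc P_i$.

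The key point is that $e$ has level exactly $i$. Both ends of $e$ lie in a common part $P$ of $\mc P_i$, but they lie in distinct parts of $\mc P_{i+1}$. Now $\mc P_{i+1}$ is obtained by splitting each part of $\mc P_i$ into the components of $G_j[X]$. So $u,v\in P$ lie in different components of $G_j[P]$. By Lemma \ref{lem:extend}, every $u$--$v$ path in $G_j$, and in particular $Q$, contains an edge with exactly one end in $P$. Such an edge joins two distinct parts of $\mc P_i$, so it survives in $G_j/\mc P_i$. Hence $C'$ is not contained in a single part of $\mc P_i$ and is a multicycle in $G_j/\mc P_i$, which is exactly what the next call CONNECTOR($\si{\mc P_i}$, $i$) requires. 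Removing $e$ from $G_{rf}$ does not affect $G_j$, and $j\neq rf$ because the edges of $G_{rf}$ inside $\mc C$ cross parts of $\mc P_{final}$. So the invariant transfers.

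The main obstacle is to check carefully that $Q$ itself is unaffected by the move. $Q$ lies in $G_j$ before $e$ is added, and $e$ came from $G_{rf}$ with $rf\ne j$, so $Q$ is unaffected. A second point needs care: the level of $e$ is defined with respect to the partition sequence computed in the first phase, and that sequence stays fixed while edges are exchanged. So the statement ``$u,v$ lie in different components of $G_j[P]$'' must be read for $G_j$ as it stood when $\mc P_{i+1}$ was built. I would argue that earlier moves within this phase only ever remove edges from a splitter whose index is the current $rf$, and only add edges at strictly smaller levels. Consequently $G_j[P]$ has not gained an edge joining these components. If this bookkeeping fails, the fallback is to apply Fact \ref{fact:cycle} directly to the cycle $C'$ and the refinement $\mc P_{i+1}$ of $\mc P_i$ to extract a crossing edge.
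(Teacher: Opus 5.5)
Your proof is correct and follows the paper's argument essentially step for step. The base case rests on the multicycle check before the second phase (the paper also gives the counting argument showing yes-instances pass it), and the inductive step takes the least-level edge $e$ of $\mc C$, uses Lemmas~\ref{lem:level} and~\ref{lem:cycle} to place $\mc C$ inside a part $P\in\mc P_i$, and applies Lemma~\ref{lem:extend} to get an edge of the closing $u$--$v$ path in $S(\mc P_i)$ crossing $\mc P_i$.

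The staleness issue you flag is real and the paper glosses over it, and your conclusion there is right, but not for the reason you state. Every edge added earlier in the chain has level strictly \emph{greater} than $i$, so it lies inside one part of $\mc P_{i+1}$ and cannot merge components of $G_j[P]$, while removals only split them. Your Fact~\ref{fact:cycle} fallback would not help, since it only yields an edge crossing $\mc P_{i+1}$, possibly $e$ itself.
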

\begin{proof}
Consider the first call to the subroutine CONNECTOR from the first phase. Clearly, $G_k$ contains an open cut  as otherwise we are done. Assuming the instance satisfying the characterization, by counting argument, there exists a multicycle in $G_k/\mc P_{\infty}$ as all the remaining coordinates are trees and contributes at most $|\mc P|-1$ edges to $G_k/\mc P_{\infty}$. If not, we have a certificate vindicating that the instance fails the characterization, and we stop. Thus, the algorithm progresses to the second phase for yes instances.

For subsequent calls, we make an inductive argument. Let us assume that the invariant holds for a call to some intermediate recursion. We assume a recursive call follows, and prove the invariant for that. Let us consider the execution sequence associated with this recursion. We pick the multicycle $\mc C$ in $G_{rf}/\mc P_{final}$. Clearly, $final\neq 0$ and $\mc C$ induces at least one edge in $G_{rf}/\mc P_{final}$. From Lemma \ref{lem:level}, the level of each edge $e$ in $\mc C/\mc P_{final}$ is strictly less than $final$. Let $e=(u,v)$ be the edge with least level in $\mc C$ and the algorithm sets $i=l(e)$. From Lemma \ref{lem:cycle}, there exists $P\in \mc P_i$ such that $V(\mc C)\subseteq P$. 
The algorithm moves the  edge $e$ to $S(\mc P_{i})$. From the assumption that this is not the last recursion, $u$ and $v$ are connected by a path $P_{uv}$ in $S(\mc P_i)$ (excluding $e$). From Lemma \ref{lem:extend}, each $P_{u,v}$ in $S(\mc P_i)$ spans beyond $\mc P$ and has at least one edge in $S(\mc P_i)/\mc P_i$. Thus, the addition of $e$ to $S(\mc P_i)$ creates a multicycle in $S(\mc P_i)/\mc P_i$.    
\end{proof}

To conclude, observe that the subroutine CONNECTOR always pulls out the edge $e$ from a cycle and this does not disturb connectedness of any of those coordinates in $\mc T$. Further, it terminates when the end vertices of the edge $e$ belong to different components in $S(\mc P_i)$. This happens when SI($\mc P_i$)$=k$ and $i$ takes the value 0. Thus $final$ always takes a positive value.  Since, the cycle $\mc C$ under consideration is in $G_{rf}/\mc P_{final}$ and completely contained in a part in $\mc P_i$, $i<final$ and there is strict decrease in the value of the second parameter over recursions. Thus, CONNECTOR always terminates and at this point, our algorithm makes a positive progress.

As there is a strict reduction in the number of components in $G_k$ over each recursion of our algorithm, there are at most $\Oh(n)$ recursions. 

\section{Extending the Algorithm for \tp{} to solve \rtp{}}
In this section, we discuss an algorithm for \rtp{} which is based on the algorithm for \tp{} detailed in the previous section. The algorithm and the correctness arguments follow similar structure with modifications to meet  the additional constraints imposed by the problem. We assume that for an edge $e$ with $I(e)\neq 0$ belongs to the coordinate $G_{I(e)}$ in $\mc T$. Of course, if a set of edges with nonzero index values form a cycle, we blindly stop. Further, for those edges with $I(e)$ being zero may belong to any of those possible $k$ coordinates. As before, we assume the first $k-1$ coordinates in $\mc T$ to be trees.

In the first phase, we modify the partition refinement rule as follows. Let $\mc P_i$ be an intermediate partition in the sequence. For each $X$ in $\mc P_i$ and for each $1\leq j \leq k$, check for a pair of vertices  $u$, $v$ in $X$ where $u$ and $v$ are not connected in $G_j$ or each path $P_{uv}$ has at least one flex edge in $G_j/\mc P_i$. If such a pair does not exist, then this is the final partition $\mc P_{\infty}$. Otherwise, we fix the least $j$ violating the condition (as before) and construct the refinement $\mc P_{i+1}$ of $\mc P_{i}$, by considering each part $X$ in $\mc P_i$ one by one and for every $u$, $v$ pair in $X$ violating the condition, we make sure that they go to distinct parts in $\mc P_{i+1}$, otherwise the same. Note that in our previous algorithm, each part $X$ in $\mc P_{\infty}$ is such that $G_j[X]$ is connected for $G_j\in \mc T$, and need not be the case here. 

In the second phase, the first call to the subroutine takes the form RES\_CONNECTOR($k$,$\infty$). Further, the subroutine picks the edge $e$ to be the flex edge with the least level in the multicycle $\mc C$ in $G_{rf}/P_{final}$ and fix the index $i=l(e)$. Now, the algorithm moves the edge $e$ to $S(\mc P_i)$. As before, if $e$ connects a pair of vertices in the same component in $S(\mc P_i)$, then the subroutine makes a recursive call RES\_CONNECTOR(SI($\mc P_i$),$i$). Otherwise the subroutine terminates and the algorithm proceeds to the next recursion if required. 
\subsection{Proof of Correctness}
The progress function remains the same as in the previous algorithm. The invariant is slightly different and as follows.
\begin{lemma}
 Each recursive call to the subroutine RES\_CONNECTOR($rf$,$final$) maintains the invariant that the subgraph $G_{rf}/\mc P_{final}$ has at least one multicycle having a flex edge in it.
\end{lemma}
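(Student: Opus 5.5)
I would mirror the proof of the invariant for CONNECTOR and argue by induction on the sequence of recursive calls of RES\_CONNECTOR. Throughout, a \emph{flex edge} is an edge $e$ with $I(e)=0$, i.e.\ an edge of $E_f$ that may legally be moved between coordinates.

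\textbf{Base case.} The base case is the call RES\_CONNECTOR($k$,$\infty$) issued after the first phase. I would use Theorem~\ref{thm:characterization} in place of Theorem~\ref{thm:main-characterization} for the counting argument. Apply the inequality to the partition $\mc P_\infty$. The coordinates $G_1,\dots,G_{k-1}$ are trees, and $G_k$ is not connected across $\mc P_\infty$. So the forced edges of $F_k$, together with the flex edges that must end up in $G_k$, number fewer than what $G_k/\mc P_\infty$ currently carries beyond a spanning forest.

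Concretely, I would count edges of $G/\mc P_\infty$. The forced edges of each $F_j$ contribute $|\mc P_\infty|-1-\mathrm{def}(F_j/\mc P_\infty)$ each. The inequality $\sum_j \mathrm{def}(F_j/\mc P_\infty)\le |E((V,E_f)/\mc P_\infty)|$ then guarantees that the flex edges are numerous enough. Since each of the first $k-1$ coordinates has at most $|\mc P_\infty|-1$ edges in the quotient, $G_k/\mc P_\infty$ must contain a cycle. That cycle must use a flex edge, because the forced edges of $F_k$ alone form a forest in $G_k/\mc P_\infty$. The last point uses the refinement rule: inside each part of $\mc P_\infty$, every pair is connected in $G_j$ by a path free of flex edges in the quotient. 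If no such multicycle exists, we have a certificate violating Theorem~\ref{thm:characterization} and stop.

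\textbf{Inductive step.} Suppose the invariant holds for RES\_CONNECTOR($rf$,$final$), with multicycle $\mc C$ containing a flex edge, and let $e=(u,v)$ be the least-level flex edge of $\mc C$ with $i=l(e)$. By Lemma~\ref{lem:level}, $i<final$. By a variant of Lemma~\ref{lem:cycle} (same proof, minimizing only over flex edges), $V(\mc C)$ lies in one part $P\in\mc P_i$. The required property, and why it holds, is:
\begin{itemize}
\item \emph{Property:} a path from $u$ to $v$ in $S(\mc P_i)$ leaves $P$ and passes through a flex edge of $S(\mc P_i)/\mc P_i$.
\item \emph{Why it holds:} $u$ and $v$ were separated when passing from $\mc P_i$ to $\mc P_{i+1}$, since $l(e)=i$. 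By the modified refinement rule, they were either disconnected in $S(\mc P_i)$, which would make this the terminating call, or every $u$--$v$ path in $S(\mc P_i)$ contains a flex edge of $S(\mc P_i)/\mc P_i$. This is the restricted analogue of Lemma~\ref{lem:extend}.
\end{itemize}
Adding $e$ to such a path closes a cycle in $S(\mc P_i)$ that contains a flex edge appearing in $S(\mc P_i)/\mc P_i$. This cycle is a multicycle of $G_{\si{\mc P_i}}/\mc P_i$ with a flex edge, which is exactly the invariant for RES\_CONNECTOR(SI($\mc P_i$),$i$).

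\textbf{Main obstacle.} I expect the hardest step to be the base case. There one must extract, from the inequality of Theorem~\ref{thm:characterization}, a cycle in $G_k/\mc P_\infty$ that contains a \emph{flex} edge rather than merely some cycle. The difficulty is that $\mc P_\infty$ no longer has parts on which each $G_j$ is connected. I would first handle this by a deficiency count part by part, using that forced edges of each $F_j$ are acyclic. If that fails, I would choose a coarser partition, induced by flex-free connectivity, to which the characterization is applied.
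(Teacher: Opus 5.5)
Your inductive step matches the paper's. Since $l(e)=i$, the vertices $u,v$ share a part of $\mc P_i$ but are split in $\mc P_{i+1}$. So by the modified refinement rule every $u$--$v$ path in $S(\mc P_i)$ carries a flex edge of $S(\mc P_i)/\mc P_i$, and adding $e$ closes the required multicycle. You do not need the flex-only variant of Lemma~\ref{lem:cycle}, which in fact does not follow from the same proof because the crossing edge it produces may be fixed; the part of $\mc P_i$ containing $u,v$ exists by the definition of $l(e)$.

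The genuine gap is the base case. You reuse the Section~3 count, asserting two things:
\begin{itemize}
\item each tree coordinate $G_j$ with $j<k$ has at most $|\mc P_\infty|-1$ edges in $G_j/\mc P_\infty$;
\item $F_k/\mc P_\infty$ is a forest.
\end{itemize}
Both are false here, precisely because $G_j[X]$ need not be connected for $X\in\mc P_\infty$. A spanning tree $G_j$ has $\sum_{X\in\mc P_\infty} c(G_j[X])-1$ edges in the quotient. A forest of $G$ can also become cyclic after contraction: take $x,y\in X$ joined by the fixed path $x\,a\,y$ with $a\notin X$, which the refinement rule permits. Your ``main obstacle'' paragraph offers no replacement, and acyclicity of $F_j$ in $G$ does not help.

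The missing idea, which the paper uses, is that for $j<k$ every \emph{flex} edge of $G_j/\mc P_\infty$ is a bridge of $G_j/\mc P_\infty$.
\begin{itemize}
\item By the refinement rule, the entry and exit vertices of each part on a quotient cycle are joined in $G_j$ by a path whose quotient edges are all fixed.
\item So a quotient cycle through a flex edge $f$ would lift to a closed walk in the tree $G_j$ that traverses $f$ exactly once, which is impossible.
\item Hence the number $\phi_j$ of flex edges in $G_j/\mc P_\infty$ equals $\mathrm{def}(F_j/\mc P_\infty)$.
\end{itemize}
Theorem~\ref{thm:characterization} applied at $\mc P_\infty$ then gives $\mathrm{def}(F_k/\mc P_\infty)\le\phi_k$.

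Every part of $\mc P_\infty$ lies inside one component of $G_k$, so $G_k/\mc P_\infty$ is disconnected. If all flex edges of $G_k/\mc P_\infty$ were bridges, we would get $\mathrm{def}(F_k/\mc P_\infty)\ge\phi_k+1$, a contradiction. Hence some flex edge lies on a cycle of the quotient. The same lifting argument, now in $G_k$, turns it into a genuine cycle of $G_k$, which is the required multicycle. The paper leaves this lifting step implicit.

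Note that you should compare flex-edge counts against deficiencies directly. Finding an arbitrary quotient cycle first and then arguing it contains a flex edge does not work.
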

\begin{proof}
Consider the partitioning rule applied in the first phase of the algorithm. Let us consider the first call to the subroutine RES\_CONNECTOR($k$,$\infty$). Let $G_j$ represent a coordinate in $\mc T$ for $1\leq j<k-1$. Clearly, $G_j/\mc P_{\infty}$ does not contain a multicycle. Let there be a cycle $\mc C$ in $G_j/\mc P_{\infty}$. Clearly, some part $P\in P_{\infty}\cap V(\mc C)$ is such that $G_{j}[P]$ is disconnected and from the partitioning rule, all these edges in $\mc C$ are fixed edges. Thus, each flex edge in $G_j/\mc P_{\infty}$ turns out to be a cut edge. This implies the number of flex edges in each $G_j/\mc P_{\infty}=def(G_j/\mc P_{\infty})$. Assuming $G_k$ has an open cut, by counting argument, there exists at least one multicycle in $G_k/\mc P_{\infty}$ having a flex edge (at least one flex edge that is not a cut edge). Clearly, $i=l(e)$ is strictly less than $\infty$ and SI($\mc P_{i}$)$\neq k$ as $e\in G_k$. Now the algorithm moves the edge $e$ to the tree $S(\mc P_{i})$ and this creates a cycle.

As before, we use induction to argue the correctness of the subroutine over subsequent iterations. Assume that the invariant holds for some intermediate recursion, and this is not the last, hence about to make the next recursive call. This implies, the edge $e=(u,v)$ and $i=l(e)$ satisfies the property that the pair of vertices $u$ and $v$ are connected in $S(\mc P_i)/\mc P_{i}$ by a path that has at least one flex edge  $e'$ in $S(\mc P_i)/\mc P_{i}$. Further, $l(e')<i$ and $e'$ is the candidate for $e$ in the next recursion. Thus, addition of $e$ to $S(\mc P_i)$ followed by a call RES\_CONNECTOR(SI($\mc P_i$),i) satisfies the invariant.
\end{proof}

\section{Decision trees}

\begin{theorem}\label{thm:enum-trees}
Assume there is an algorithm for \kpst with running time $T(n,m,k)$.
Then there is an enumerating algorithm that enlists all (pairwise non-isomorphic) $k$-tuples of pairwise disjoint spanning trees in $G$ with at most $\Oh(mk\cdot T(n,m,k))$ operations per one output.
\end{theorem}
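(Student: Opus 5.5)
We use the standard flashlight (binary partition / decision tree) search, with the \kpst{} algorithm of running time $T(n,m,k)$ as the feasibility oracle. Fix an ordering $e_1,\dots,e_m$ of $E(G)$. A node of the decision tree at depth $t$ is a \pack{} $\mc T$ in which each of $e_1,\dots,e_t$ is either assigned to a coordinate or declared unused. The root is the empty \pack{} $(\emptyset,\dots,\emptyset)$. A node is \emph{alive} if the oracle returns a \tpack{} restricted by $\mc T$ that uses none of the edges declared unused. We enforce this by running the oracle on $G$ with the declared-unused edges deleted, which is again an instance of \kpst. The children of a node at depth $t$ are obtained by deciding $e_{t+1}$ in one of at most $k+1$ ways: put it into $F_1,\dots,F_k$, or declare it unused. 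Leaves at depth $m$ that are alive correspond bijectively to \tpack{}s, that is, to ordered $k$-tuples of pairwise disjoint spanning trees. By Theorem \ref{thm:characterization}, which is also the correctness guarantee behind the \kpst{} algorithm, the oracle decides liveness exactly.

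To output only pairwise non-isomorphic tuples, we break the symmetry $S_k$ by requiring a canonical order of the coordinates. Call an ordered tuple \emph{canonical} if the coordinates are sorted by the index of their smallest edge. Since the trees are edge-disjoint and nonempty (assume $n\ge 2$; otherwise there is a single trivial output), these minimum indices are distinct, so every isomorphism class contains exactly one canonical tuple. The rule is enforced locally during branching: edge $e_{t+1}$ may be placed into a currently empty coordinate $F_j$ only if $F_1,\dots,F_{j-1}$ are all nonempty. Every canonical tuple satisfies this rule at each step. Conversely, every tuple built under the rule is canonical, because the coordinate that receives its first edge earlier has the smaller minimum index. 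This is a purely syntactic restriction on the children, so liveness is still decided by the oracle without any change to \kpst.

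The delay analysis is where the care lies. We traverse the tree depth-first and expand only alive children. Every alive internal node has at least one alive child, namely the choice for $e_{t+1}$ made by its witness tree pack. That witness is canonical after relabelling coordinates, and relabelling is consistent with the already fixed nonempty coordinates. The only exception is when a still-empty coordinate must receive $e_{t+1}$ under a different label, and we handle this by permuting the still-empty labels, which the local rule allows since empty coordinates are interchangeable. Hence every root-to-leaf descent from an alive node ends in an output, and the search never enters a dead subtree beyond one oracle call per rejected child.

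Between two consecutive outputs, the DFS backtracks up some path and then descends to a new leaf. Along this walk it visits $\Oh(m)$ alive nodes and, at each, tests at most $k+1$ children. This gives $\Oh(mk)$ oracle calls, each costing $T(n,m,k)$ plus $\Oh(m)$ bookkeeping. The $\Oh(m)$ term is dominated, since the oracle must at least read its input. The delay is therefore $\Oh(mk\cdot T(n,m,k))$. We can also cut the number of oracle calls per descent by reusing the witness tree pack, but this is not needed for the stated bound.

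The main obstacle is the interaction between the isomorphism-breaking rule and the guarantee that alive nodes always have an alive child. This is the relabelling argument in the previous paragraph and must be checked carefully.
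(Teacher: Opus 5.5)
Your proposal is correct and follows essentially the same route as the paper. Both use a depth-first search over a decision tree that branches on the next edge into a nonempty forest, the first empty forest, or nowhere, prune with the restricted tree-packing oracle on the graph with discarded edges deleted, and charge $\Oh(m)$ nodes times $k+1$ oracle calls between consecutive outputs; your explicit check that every alive node keeps an alive child under the symmetry-breaking rule (by permuting the still-empty coordinates of the witness) fills in a step the paper only asserts.
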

\begin{proof}
Let us fix an ordering on $E$.
We construct a decision tree with nodes representing partial solutions, i.e. $k$-tuples of pairwise disjoint forests in $G$.
The partial solutions at depth $\ell$ can use only edges from $e_1, \dots e_\ell$.
The root of the tree is given by tuple $\emptyset^k$ and has depth 0.
The leaves for which all $F_i$ contain $n-1$ edges correspond to the $k$-tuples that should be returned.

The algorithm performs a DFS starting at the root and for each node at depth $\ell < m$ it tries to augment some of $F_i$ with edge $e_{\ell+1}$.
If the edges in $F_i$ start forming a cycle then such a node is neglected.
The node also gets neglected if the instance of \kpst given by
$E' = \{e_{\ell+1}, \dots, e_m\}$ and $F_1, \dots, F_k$ has a negative answer.
After neglecting a node we omit its whole subtree since we know it does not contain any solutions to be returned.

Consider a node $w$ at depth $\ell$ given by a partial solution $F_1, \dots, F_k$.
For each $i$ such that $F_i \neq \emptyset$ we construct a tree $F'_i = F_i \cup \{e_{\ell+1}\}$.
With other trees unaffected we check if such a partial solution is extendable to some full solution and, if yes, we proceed recursively to the newly created node.
If the $k$-tuple in the node $w$ contains empty forests we pick the smallest $i$ such that $F_i$ is empty.
We set $F'_i = \{e_{\ell+1}\}$ and we move to the new node.
After checking all possibilities of adding $e_{\ell+1}$ in the node $w$,
we proceed to another node at level $\ell + 1$ with all trees unchanged.
This move corresponds to not using $e_{\ell+1}$ at all.

We argue that all $k$-tuples of spanning trees will be finally generated,
one per each orbit of the tuple isomorphism.
Let $ind(T)$ denote the smallest $\ell$ such that $e_\ell \in T$.
For a set $\{T_1, \dots, T_k\}$ of spanning trees we fix an ordering such that $ind(T_1) < ind(T_2) < \dots < ind(T_k)$.
We can see that by augmenting only the empty forest with the smallest index we will generate only this ordering.

To estimate the number of operations between outputting two results,
note that DFS only enters nodes with at least one proper solution in its subtree.
Therefore the path traveled between visiting two leaves consists of at most $2m$ nodes.
At each node we consider at most $k+1$ children and in every such step we need to launch the algorithm for \kpst.
By multiplying these quantities we obtain the postulated running time.
\end{proof}

\begin{theorem}
Let $G$ be a DAG graph with fixed vertices $s,t$.
There is an enumerating algorithm that enlists all (pairwise non-isomorphic) $k$-tuples of pairwise edge-disjoint (or vertex-disjoint) $s,t$-paths with at most $\Oh(m^2n)$ operations per one output.
\end{theorem}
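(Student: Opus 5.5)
We follow the decision-tree scheme of Theorem~\ref{thm:enum-trees}. The only new ingredient needed is a polynomial-time extendability oracle for partial solutions: given $k$ partial paths, decide whether they extend to $k$ pairwise disjoint $s,t$-paths. We reduce this oracle to a single maximum-flow computation.

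\textbf{Decision tree.} Fix a topological order of $V$ and an ordering $e_1,\dots,e_m$ of $E$ consistent with it, i.e.\ sorted by tail vertex. A node at depth $\ell$ is a $k$-tuple $(Q_1,\dots,Q_k)$ of pairwise edge-disjoint walks. Each walk starts at $s$ and uses only edges among $e_1,\dots,e_\ell$; it may also be empty. At depth $\ell$ we consider $e_{\ell+1}=(x,y)$ and branch as follows. The edge may be appended to any $Q_i$ currently ending at $x$; in the vertex-disjoint variant we additionally require that $y$ is not yet used, unless $y=t$. If $x=s$, the edge may instead start the empty $Q_i$ of smallest index. Finally, the edge may be skipped. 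Because the order is topological, once the algorithm passes the edges leaving a vertex, no path can later be extended from it. Hence every $k$-tuple of paths is generated exactly once up to isomorphism. For the isomorphism argument we use the same $ind$ argument as in Theorem~\ref{thm:enum-trees}: start only the empty slot of smallest index.

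\textbf{Extendability oracle.} A node is kept only if its partial solution extends. Each nonempty $Q_i$ currently ends at a vertex $v_i$, and the empty slots may be viewed as ending at $s$. Extendability means that there exist disjoint paths from the multiset of endpoints $\{v_i\}$ to $t$. These paths may use only unused edges, and in the vertex case only unused vertices other than the $v_i$ themselves. Acyclicity guarantees that the concatenations are genuine paths.

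This is a flow problem. Add a super-source $\sigma$ with an arc to $v_i$ of capacity equal to the multiplicity of $v_i$, and give each unused edge capacity~1. In the vertex-disjoint case, split each unused vertex into an in-copy and an out-copy joined by an arc of capacity~1. The partial solution extends iff the maximum flow from $\sigma$ to $t$ is at least $k$. Integral flows decompose into paths, and since the graph is a DAG there are no cycles to discard.

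One subtlety is the restriction to edges among $e_{\ell+1},\dots,e_m$. A path continuing from $v_i$ uses only edges with tail at or after $v_i$ in the topological order. Edges with tail at $v_i$ may, however, have index at most $\ell$ and already be skipped, so the flow network must contain only edges $e_{\ell+1},\dots,e_m$. We get $k$ augmenting paths, each found in $O(m)$ time, or we can use the $O(nm)$ bound, giving oracle time $T=O(km)$. Alternatively, we could restrict to edges of index $>\ell$ and prove that this still gives a correct answer.

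\textbf{Delay.} As in Theorem~\ref{thm:enum-trees}, at most $2m$ nodes are visited between outputs, and each node has at most $k+1$ children. This gives delay $O(m\cdot k\cdot T)$. To reach the claimed $O(m^2n)$ we would argue that at each node at most two children are genuinely distinct. The branches that append the edge to different $Q_i$ ending at the same $x$ are isomorphic, so we branch only on the smallest such index. This leaves three children: append, start, or skip. Combining this with the $O(nm)$ flow bound gives $O(m\cdot nm)=O(m^2n)$.

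\textbf{Main obstacle.} The hardest step is the isomorphism bookkeeping when several partial paths share an endpoint. In the edge-disjoint case the tails may diverge later, so the order-canonicalization argument must be made to work: at ties we always extend the lowest-indexed path and impose a canonical ordering by $(ind, \text{first edge})$. Here the first attempt would be to define the canonical labeling of a tuple by the lexicographic order of the paths' edge-index sequences, and to verify that extending the lowest-index path at each tie produces exactly that labeling.
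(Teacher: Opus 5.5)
Your decision tree, the rule of starting only the smallest empty slot, and the flow oracle on $e_{\ell+1},\dots,e_m$ all match the paper. The step you use to reach $\Oh(m^2n)$ is false, however, and it breaks completeness. Appending $e_{\ell+1}=(x,y)$ to two different partial paths $Q_i\neq Q_j$ that both end at $x$ does not give isomorphic tuples. In the edge-disjoint case the two prefixes differ, so the multisets $\{Q_i+e,\,Q_j\}$ and $\{Q_i,\,Q_j+e\}$ are different.

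Concretely, take edges $sa,sb,ax,bx,xy,xz,yt,zt$ and $k=2$. The solutions $\{saxyt,\,sbxzt\}$ and $\{saxzt,\,sbxyt\}$ are non-isomorphic. Both pass through the same node $(sax,sbx)$ when $xy$ is processed. If you append only to the smallest index, one of them is never output. No canonical relabeling repairs this. Labels are already fixed by $ind$, since disjoint paths have distinct first edges, so a shared endpoint raises no isomorphism question. Your ``main obstacle'' paragraph therefore misdiagnoses the difficulty.

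The missing idea, which the paper uses, is to keep all children but share the oracle call. A child's flow network depends only on the remaining edges $e_{\ell+2},\dots,e_m$ and the multiset of current endpoints, plus the set of used vertices in the vertex-disjoint case. Every ``append'' child changes that multiset identically, replacing one copy of $x$ by $y$. This includes starting the smallest empty slot when $x=s$, because empty slots sit at $s$. So one max-flow computation decides all append children, and a second decides the skip child.

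Each of the at most $2m$ nodes on the DFS path between consecutive outputs then costs $\Oh(nm)+\Oh(k)$. Since $k\le m$ whenever any solution exists, the delay is $\Oh(m^2n)$. Without this observation your bound stays at $\Oh(mk\cdot T)$, which does not reduce to $\Oh(m^2n)$ in general.
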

\begin{proof}
The idea of the proof is the same as in Theorem~\ref{thm:enum-trees}.
Each node in the decision tree is associated with a partial solution $P_1,\dots,P_k$, where $P_i$ is a path starting at $s$.
We fix a topological ordering on $E$.

Consider a node $w$ at level $\ell$ with a partial solution $P_1,\dots,P_k$.
We try to augment every path $P_i$ ending at $u$ where $e_{\ell+1} = (u,v)$.
For each such $i$ we create $P'_i$ by adding $e_{\ell+1}$ to $P_i$ and move recursively to the newly created node.
Again, if there are some empty paths in the partial solution and $u=s$ we augment only the one with the smallest index.
We also create a duplicate node in which $e_{\ell+1}$ will not belong to any path.

In order to omit subtrees with no solutions we check for each node if its
partial solution is extendable.
By Menger's Theorem this is equivalent to computing $s,t$-flow
in a graph with edges $\{e_{\ell+1}, \dots, e_m\}$ and $(s,s_i)$ for $s_i$
being the end of the path $P_i$ (note that in the edge-disjoint case these edges might be parallel).
If we consider the vertex-disjoint case, we firstly apply the known reduction from \textsc{Min $s,t$-Vertex Cut} to \textsc{Min $s,t$-Edge Cut}

The argument that every solution modulo isomorphism will be generated is alike
in Theorem~\ref{thm:enum-trees}.
Let $ind(P)$ be the index of the first edge on path $P$ (since the egdes
are ordered topologically this is also the edge with the smallest index).
For a set $\{P_1, \dots, P_k\}$ we fix an ordering such that $ind(P_1) < ind(P_2) < \dots < ind(P_k)$ and only this tuple will be found by the algorithm.

To estimate the number of operations between outputting two results,
consider the path of length at most $2m$ between the corresponding leaves in the decision tree.
Although each node can have $k+1$ children, observe that the answer returned by the max flow algorithm does not depend on the index of a path to which we add the new edge.
Therefore we need to run the $\Oh(nm)$-algorithm for max flow only twice per node, depending on whether we add an edge or not.
The claim follows.
\end{proof}

\bibliographystyle{plain}
\bibliography{enum}

\end{document}